\theoremstyle{plain}
\newtheorem{theorem}{Theorem}[section]
\newtheorem{lemma}[theorem]{Lemma}
\newtheorem{coro}[theorem]{Corollary}
\theoremstyle{definition}
\newtheorem{defi}[theorem]{Definition}
\title{A basic lower bound for property testing\thanks{Research supported in part by an Israel Science Foundation grant number 879/22.}}
\author{Eldar Fischer\thanks{Technion -- Israel Institute of Technology. \tt eldar@cs.technion.ac.il}}
\date{}
\begin{document}

\maketitle
\begin{abstract}\noindent
An $\epsilon$-test for any non-trivial property (one for which there are both satisfying inputs and inputs of large distance from the property) should use a number of queries that is at least inversely proportional in $\epsilon$. However, to the best of our knowledge there is no reference proof for this intuition in its full generality. Such a proof is provided here. It is written so as to not require any prior knowledge of the related literature, and in particular does not use Yao's method.
\end{abstract}

\section{Introduction and statement}\label{sec:intro}

Let us fix a constant size alphabet $\Sigma$, and for any $n\in\mathbb N$ consider the set $\Sigma^n$ of all possible inputs of size $n$. We analyze algorithms that are given random access to an input $X\in\Sigma^n$ (with the input size $n$ being known in advance to the algorithm), but in general do not run for sufficiently many steps to read the input in its entirety. Such (randomized) algorithms are defined as follows.

\begin{defi}
For an alphabet $\Sigma$, a {\em query-making algorithm $\mathcal A$ over $\Sigma^n$ with $q$ queries} is a random algorithm that is given $n$ in advance, and runs over an input $X\in\Sigma^n$ in $q$ stages as follows: At Stage $i$, the algorithm selects an index $j_i\in [n]$ based only on $X(j_1),\ldots,X(j_{i-1})$ and its internal coin tosses, and gains access to the value $X(j_i)$. After $q$ stages have concluded, the algorithm either accepts or rejects the input based on $X(j_1),\ldots,X(j_q)$ and its internal coin tosses.
\end{defi}

The above definition is for adaptive algorithms, where the query made in every stage can depend on the outcomes of the preceding stages. We also analyze algorithms for which this is not the case.

\begin{defi}
A {\em non-adaptive query-making algorithm with $q$ queries} is an algorithm $\mathcal A$ that selects $j_1,\ldots,j_q$ based only on its internal coin tosses (i.e.\ it does not gain access to the input until the querying stages are over), and then either accepts or rejects the input based on $X(j_1),\ldots,X(j_q)$ and its internal coin tosses.
\end{defi}

A property $\mathcal P$ can be any subset of $\bigcup_{n\in\mathbb N}\Sigma^n$. In general, for $q(n)=o(n)$, a query-making algorithm with $q(n)$ queries for an input $X\in\Sigma^n$ cannot solve the exact decision problem of belonging to $\mathcal P$ for most reasonable properties $\mathcal P$. However, such algorithms can at times solve an approximate task where for some inputs any answer is acceptable.

A property-testing algorithm is a query-making algorithm that can distinguish inputs satisfying a certain property from inputs that are far from the property. To define them (in the ``classical'' testing model), we first define what it means to be far.

\begin{defi}
The {\em normalized Hamming distance} between two same-size inputs $X,Y\in\Sigma^n$ is defined as the number of differences between the inputs divided by their input size, that is $d(X,Y)=|\{j:X(j)\neq Y(j)\}|/n$. For a property $\mathcal P$ we define its distance from $X$ by the minimum, $d(X,\mathcal P)=\min_{Y\in\mathcal P\cap\Sigma^n}d(X,Y)$.
\end{defi}

The following is the formal definition of classical property testing. Such algorithms were first defined and investigated in \cite{BLR:org} and \cite{RS:org}. The constant $\frac23$ below is arbitrary and can be replaced with any other constant larger than $\frac12$.

\begin{defi}
Given $\epsilon>0$, an {\em $\epsilon$-testing algorithm} for a property $\mathcal P$ is a query-making algorithm (adaptive or non-adaptive), so that every $X\in\Sigma^n\cap \mathcal P$ is accepted with probability at least $\frac23$, and every $X\in\Sigma^n$ for which $d(X,\mathcal P)\geq\epsilon$ is rejected with probability at least $\frac23$.
\end{defi}

In the above definition, any acceptance probability is allowed for an input $X$ for which $X\notin \mathcal P$ and $d(X,P)<\epsilon$. Since its inception, property testing as well as variants with other definitions of distances and allowable queries have been vigorously investigated. See \cite{G:book} for a thorough tutorial.

A straightforward intuition is that for ``non-trivial'' properties, i.e.\ properties that admit both satisfying inputs and inputs that are far from the property (in the sense of having a constant lower bound on their distance from the property), the $\epsilon$-testing task would require a number of queries that is at least inversely proportional to $\epsilon$. However, to the best of our knowledge no formal proof of this intuition for general properties has ever been published.

In \cite{BG:nont} this intuition is proved for all binary properties that are not too dense. Specifically they prove that all properties that contain at least one but not more than $2^{n-\Omega(n)}$ satisfying instances from $\{0,1\}^n$ require $\Omega(1/\epsilon)$ many queries for an $\epsilon$-test. A weaker statement, but without the non-density requirement, is proved in \cite{GKS:estr} for properties in the dense graph model (first defined in \cite{GGR:dense}). These properties are required to be closed under the group of possible graph isomorphisms. The bound in \cite{GKS:estr} is $\Omega(1/\sqrt{\epsilon})$, and its proof uses the conversion of \cite{GT:three} of any $\epsilon$-test under this model (at quadratic cost) into a non-adaptive one for which every possible index is queried with equal probability.

We prove in Section \ref{sec:proof} the following statement, which implies the general intuition for all non-trivial properties. Its proof here is based on first principles only, and is written on purpose to not require the customary use of Yao's principle, even that using it would have shortened the proof.

\begin{theorem}[Meticulous statement]\label{thm:met}
If a property $\mathcal P\subset\Sigma^n$ (for any alphabet $\Sigma$) admits both a satisfying input $S\in\Sigma^n\cap\mathcal P$ and an input $U\in\Sigma^n$ for which $d(U,\mathcal P)\geq k/n$, then for any $1\leq l<k$, an $(l/n)$-test for $\mathcal P$ requires at least $\Omega(k/l)$ many queries to meet its guarantees.
\end{theorem}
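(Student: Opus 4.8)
The plan is to construct, from the single satisfying input $S$ and the single far input $U$, a family of ``hybrid'' inputs that interpolate between them, and to argue that a test with too few queries cannot simultaneously accept the ones near $S$ and reject the ones near $U$. Concretely, I would first fix a set $D\subseteq[n]$ of exactly $k$ coordinates on which $U$ differs from some nearest point of $\mathcal P$ — so that modifying $U$ on fewer than $k-l$ of these coordinates still leaves distance at least $l/n$ from $\mathcal P$ — and then look at inputs obtained from $U$ by overwriting a subset $T\subseteq D$ of these coordinates with the values $S$ takes there. When $T=\emptyset$ we get $U$ itself (far from $\mathcal P$, hence rejected w.p. $\ge 2/3$); when $|T|$ is small, say $|T|\le k-l$, the input is still $(l/n)$-far and must be rejected. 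The idea is to find, somewhere along a chain from $U$ up to a large $T$, two adjacent (or near-adjacent) hybrids on which the tester behaves almost identically, yet one should be rejected; iterating forces $S$ itself (or something close enough to $S$ to be accepted) to also be rejected, a contradiction.

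The cleaner way to run this, which I would actually carry out, is a direct hitting-set / averaging argument rather than a long hybrid chain. Suppose $\mathcal A$ is an $(l/n)$-test making $q$ queries. Consider running $\mathcal A$ on $U$ versus running it on the hybrid $U^{T}$ that agrees with $S$ on a uniformly random subset $T\subseteq D$ of size roughly $(k-l)/2$ (or any size that keeps $U^T$ at distance $\ge l/n$); I want to say that if $q$ is too small, $\mathcal A$ cannot tell $U$ from a random such $U^T$. The subtle point is adaptivity: the queries $\mathcal A$ makes depend on answers seen so far. I would handle this by coupling the two executions — run $\mathcal A$ on $U$, and simultaneously maintain the conditional distribution of a random $U^T$ given the answers revealed so far; as long as no queried coordinate has landed in $T$, the two executions are identical. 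Each query falls into $D$ with some probability, and conditioned on landing in $D$ it hits $T$ with probability about $|T|/k = \Theta((k-l)/k)$. When $l$ is, say, at most $k/2$ this is $\Theta(1)$ per query-into-$D$, which looks too lossy; the fix is to instead take $|T|$ proportional to $l$ — wait, that is backwards. Let me restate: I should choose the random set $R\subseteq D$ to be the coordinates I do \emph{not} copy from $S$, of size about $l$, so the hybrid $U^{D\setminus R}$ agrees with $S$ on all but $\sim l$ of $D$, is hence within $l/n$ of $S\in\mathcal P$ up to lower-order terms, and must be accepted; whereas $U$ is far and must be rejected. A single query hits the random set $R$ (and only then can it possibly distinguish the two) with probability $O(l/k)$, so $q$ queries distinguish with probability $O(ql/k)$; if $ql/k$ is a small enough constant, total variation distance between the two executions is $< 1/3$, contradicting the $2/3$-vs-$2/3$ gap. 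This forces $q = \Omega(k/l)$.

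More carefully on the coupling, since it is the technical heart: I would define a process on $U$ where I lazily sample $R$. Maintain a set of ``live'' coordinates of $D$ that could still be in $R$; initially all of $D$. When $\mathcal A$ queries a coordinate $j$: if $j\notin D$, answer $U(j)=$ (value consistent with both, as these coordinates are untouched) and continue; if $j\in D$, then in the hybrid execution $j\in R$ with probability equal to the current conditional probability, which is at most $|R_{\text{target}}|/(\text{number still live})$ — and as long as not too many coordinates of $D$ have been queried (which holds since $q$ is small relative to $k$), this is $O(l/k)$ per such query. On the ``bad'' event that some queried $j$ lies in $R$, I let the executions diverge arbitrarily; on the complementary ``good'' event, the transcript $\mathcal A$ sees on $U$ is identical to what it would see on the hybrid, because the hybrid equals $U$ on every queried coordinate. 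A union bound over the $q$ queries gives $\Pr[\text{bad}]\le c\,ql/k$. Therefore $|\Pr[\mathcal A \text{ accepts } U] - \Pr[\mathcal A \text{ accepts } U^{D\setminus R}]| \le c\,ql/k$ (after also averaging over the final choice of $R$, using that on the good event the inputs induce identical accept probabilities), and since the left side is at least $2/3 - 1/3 = 1/3$, we get $q \ge \frac{1}{3c}\cdot\frac{k}{l} = \Omega(k/l)$.

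I expect the main obstacle to be the bookkeeping around distances: one must verify that a hybrid agreeing with $S$ outside a size-$\sim l$ subset of $D$ is genuinely close enough to $\mathcal P$ (it is within $d(S,\mathcal P)+l/n = l/n$ of $\mathcal P$ since $S\in\mathcal P$, so this direction is clean), and dually that $U$ with a few coordinates of $D$ flipped toward $S$ stays at distance $\ge l/n$ — this uses that $d(U,\mathcal P)\ge k/n$ and that changing $<k-l$ coordinates changes the distance by less than $(k-l)/n$, so it is really the triangle inequality applied to normalized Hamming distance. A secondary nuisance is making the ``number of coordinates of $D$ still live stays $\ge k/2$'' claim rigorous, which just needs $q \le k/2$, handled by noting that if $q > k/2 \ge (k/l)/2$ (as $l\ge 1$) the bound $\Omega(k/l)$ holds trivially; so one may assume $q\le k/2$ throughout the coupling. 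None of this requires Yao's principle: the argument is a direct indistinguishability argument against a \emph{single} fixed randomized tester, using only the definition of an $(l/n)$-test and elementary probability.
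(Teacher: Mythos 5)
Your plan — build hybrids between $U$ and a nearest point of $\mathcal P$ by overwriting a subset of $D$, then argue that a low-query tester cannot distinguish the right pair — is the paper's approach, but as written the final argument has the two endpoints crossed, and both halves of the contradiction break. You claim the hybrid $U^{D\setminus R}$ ``must be accepted'' because it is within $l/n$ of $S$; but an $(l/n)$-test only guarantees acceptance for inputs \emph{in} $\mathcal P$, and the hybrid is generally not in $\mathcal P$. In fact, the triangle inequality you invoke yourself ($d(U^{D\setminus R},\mathcal P)\ge d(U,\mathcal P)-d(U,U^{D\setminus R})\ge k/n-(k-l)/n=l/n$) shows the hybrid must be \emph{rejected}, not accepted. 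Symmetrically, the coupling is set against the wrong base point: $U$ and $U^{D\setminus R}$ agree on $R\cup([n]\setminus D)$ but \emph{differ} on $D\setminus R$ (size about $k-l$), so ``no query lands in $R$'' is not the event on which the two transcripts coincide, and the $O(ql/k)$ bound does not apply to this pair. The pair whose transcripts agree unless a query lands in $R$ is $(V,\,U^{D\setminus R})$, where $V$ is a \emph{nearest} satisfying input to $U$ (your overwriting must use $V$, not an arbitrary $S\in\mathcal P$, or the hybrid also need not be close to $\mathcal P$, since $U$ and $S$ may disagree outside $D$).

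Once you replace $U$ by $V$ throughout the final paragraph and take $S$ to be the nearest satisfying input, the argument is correct and coincides with the paper's: $V\in\mathcal P$ must be accepted w.p.\ at least $2/3$, the hybrid at distance $\ge l/n$ must be rejected w.p.\ at least $2/3$, they differ only on the size-$l$ set $R\subseteq D$, and a union/coupling bound over $q$ queries yields $q=\Omega(|D|/l)=\Omega(k/l)$. The remaining differences from the paper are stylistic: the paper first reduces to a non-adaptive tester that queries only inside $D$ (Lemmas~\ref{lem:in} and~\ref{lem:nad}) and then picks $D'$ deterministically as the $l$ lowest-probability indices, whereas you handle adaptivity via a lazy-sampling coupling and take $R$ uniformly at random with an averaging step; both are valid ways to avoid Yao's principle. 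You should also work with $|D|=k'\ge k$, where $k'/n=d(U,\mathcal P)$, rather than artificially truncating $D$ to size $k$.
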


Its immediate corollary is the following more comprehensible statement.

\begin{coro}[Straightforward statement]
If for a fixed $\alpha>0$ a property $\mathcal{P}$ over any alphabet $\Sigma$ admits, for infinitely many $n$, both a satisfying input $S_n\in\Sigma^n\cap\mathcal P$ and an input $U_n\in\Sigma^n$ for which $d(U_n,\mathcal P)\geq\alpha$, then for any $\epsilon<\alpha$ and infinitely many values of $n$, an $\epsilon$-test for $\mathcal P$ requires at least $\Omega(\alpha/\epsilon)$ many queries to meet its guarantees.
\end{coro}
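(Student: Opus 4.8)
The plan is to exhibit, from the given inputs $S\in\mathcal P$ and $U$ with $d(U,\mathcal P)\geq k/n$, a pair of distributions on inputs — one supported on $\mathcal P$ and one supported on inputs that are $(l/n)$-far from $\mathcal P$ — that a small number of queries cannot tell apart. Concretely, I would first reduce to the case where $U$ and $S$ differ in as few coordinates as possible while keeping $d(U,\mathcal P)\ge k/n$: by moving $U$ towards $S$ one coordinate at a time and stopping just before the distance drops below $k/n$, one obtains an input $U'$ that differs from $S$ in at least $k$ and — crucially — \emph{at most some controlled number} of coordinates (the precise bookkeeping gives that we may assume $U$ and $S$ disagree on a set $D$ with $|D|\ge k$, and after trimming, disagree on not too many more). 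Let $D=\{j:S(j)\ne U(j)\}$.

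Next I would build the hard instance distributions. Let $Y_{\mathrm{yes}}$ be the constant distribution putting all its mass on $S$. For the ``no'' side, for a uniformly random subset $R\subseteq D$ of size exactly $k$ (or $k-1$, whatever the trimming leaves), let $Y_{\mathrm{no}}$ be the input that agrees with $U$ on $R$ and with $S$ everywhere else. Two facts need checking: (i) every such $Y_{\mathrm{no}}$ is $(l/n)$-far from $\mathcal P$ — this follows because $Y_{\mathrm{no}}$ differs from $U$ in fewer than $k-l$ places, and $U$ is $(k/n)$-far, so by the triangle inequality for Hamming distance $d(Y_{\mathrm{no}},\mathcal P)\ge (k-(k-l))/n = l/n$; and (ii) the two distributions look identical on any fixed query index $j$ unless $j\in D$, and even for $j\in D$ the probability that a query ``reveals'' a difference is only $|R|/|D|$, which we arrange to be $O(l/k)$ by the trimming step. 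Now take a purported $(l/n)$-test making $q$ queries. By a standard hybrid/coupling argument over the (at most $q$) queries the algorithm makes — tracking the probability that some query lands in the random set $R$ — the total variation distance between the algorithm's view on $Y_{\mathrm{yes}}$ and on $Y_{\mathrm{no}}$ is at most $q\cdot O(l/k)$. Since the test must accept $S$ with probability $\ge\frac23$ and reject every $Y_{\mathrm{no}}$ with probability $\ge\frac23$, this variation distance must be at least $\frac13$, forcing $q=\Omega(k/l)$.

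The one genuinely delicate point — and the step I expect to be the main obstacle — is handling \emph{adaptivity} correctly in the hybrid argument. A non-adaptive algorithm queries a fixed (random) set of indices, and then ``$q$ queries hit $R$'' is a simple union bound; but an adaptive algorithm chooses its next index based on answers seen so far, and on the ``yes'' instance $S$ it never sees anything pointing it toward $D$, whereas conditioning on the random $R$ could in principle correlate future queries. The clean way around this is to couple the two executions step by step: run the algorithm against $S$ and against $Y_{\mathrm{no}}$ \emph{with the same coins and the same random $R$}, and observe that the two executions proceed identically — same queries, same answers — up until the first stage at which the algorithm queries an index in $R$. Conditioned on $R$ being a uniform $k$-subset of $D$ and on the execution-so-far (which is a function of $S$ alone, hence independent of $R$ given that no index of $R$ has been hit), the next query hits $R$ with probability at most $|R|/(|D|-q)\le O(l/k)$ provided $q\le |D|/2$, say; summing over the $q$ stages bounds the ``decoupling'' probability by $q\cdot O(l/k)$, and on the complementary event the accept/reject decision is identical. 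Everything else is triangle-inequality bookkeeping and the choice of constants, so the proof reduces to getting this coupling statement stated and justified cleanly.
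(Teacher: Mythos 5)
Note first that the statement you were asked about is the \emph{Corollary}, whose proof in the paper is a two-line deduction from Theorem~\ref{thm:met} (choose $n$ large, set $k=\lceil\alpha n\rceil$ and $l=\lceil\epsilon n\rceil$, check $1\le l<k$ and $k/l=\Omega(\alpha/\epsilon)$). You chose to prove Theorem~\ref{thm:met} itself, which is the substantive content, so I will assess that.

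Your proof has a genuine gap, and it is exactly at the step you flag as ``precise bookkeeping.'' You anchor the ``no'' family at the given satisfying input $S$ and take $D=\{j:S(j)\ne U(j)\}$, hoping a trimming step will make $|D|$ small (you need $|D|<2k-l$ for your triangle-inequality claim that $d(Y_{\mathrm{no}},\mathcal P)\ge l/n$). But $|D|=n\cdot d(U,S)$ can be much larger than $n\cdot d(U,\mathcal P)$ whenever $S$ is not the \emph{closest} satisfying input to $U$, and ``moving $U$ toward $S$ one coordinate at a time and stopping just before the distance drops'' does not fix this: the stopping condition only controls $d(U',\mathcal P)$, not $d(U',S)$, and for an adversarial flipping order $|D'|$ stays large. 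For instance with $\mathcal P=\{0^n,\,1^m0^{n-m}\}$, $S=0^n$, $U=1^{m+k}0^{n-m-k}$, one has $d(U,\mathcal P)=k/n$ but $|D|=m+k$, and flipping the last $k$ ones first stops immediately with $|D'|=m+k$. The fix is what the paper does: replace $S$ by $V$, the closest point of $\mathcal P$ to $U$, so that $|D|$ equals $n\cdot d(U,\mathcal P)$ \emph{exactly}, which makes the triangle inequality tight (Lemma~\ref{lem:graded}: $d(U_A,\mathcal P)=|A|/n$) and eliminates the trimming entirely. Note also that the paper then has no need for $|D|$ to be small; a large $|D|$ only helps the lower bound.

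There is also an internal inconsistency in your parameters. You take $|R|=k$ and simultaneously claim the per-query hitting probability is $|R|/|D|=O(l/k)$; but with your own trimming bound $|D|<2k-l$ this ratio is $k/|D|>1/2$, which yields no lower bound at all. What you want is $|R|=l$ (matching the paper's $|A|=l$), so the hitting probability is $\approx l/|D|\le l/k$ and $q<k/3l$ forces a $\Theta(1/3)$ gap in acceptance probabilities. Your coupling treatment of adaptivity is sound in spirit and parallels the paper's Lemmas~\ref{lem:in} and~\ref{lem:nad}; the paper's route (first restrict queries to $D$, then derandomize the adaptivity against $V$, then union-bound over the $l$ least-queried indices of $D$) is cleaner and avoids the conditional $|R|/(|D|-q)$ estimate, but either would serve once the anchor and the size of $R$ are corrected.
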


\begin{proof}
We consider any $n>\max\{1/\epsilon,1/(\alpha-\epsilon)\}$ for which there is an input $S_n\in\Sigma^n\cap\mathcal P$ and an input $U_n\in\Sigma^n$ for which $d(U_n,\mathcal P)\geq\alpha$. For $0<\epsilon<\alpha$ and $n$ as above, denoting $k=\lceil\alpha n\rceil$ and $l=\lceil\epsilon n\rceil$, we deploy Theorem \ref{thm:met}, noting that $1\leq l<k$ (by the choice of $n$) and $k/l=\Omega(\alpha/\epsilon)$ in this case.
\end{proof}

\section{The main proof}\label{sec:proof}

Using the notation of the statement of Theorem \ref{thm:met}, we set $k'=d(U,\Sigma^n\cap\mathcal P)\cdot n$, noting that this is an integer satisfying $k'\geq k$. We also let $V\in\Sigma^n\cap\mathcal P$ to be the satisfying input that is closest to $U$ (i.e.\ of distance $k'/n$), and let $D=\{j\in [n]:U(j)\neq V(j)\}$ be the set of indexes where $U$ and $V$ differ. In particular, $|D|=k'$.

For every set $A\subseteq D$, we define by $U_A\in\Sigma^n$ the input resulting from replacing the values of $V$ that relate to $A$ with the values of $U$. That is, for $j\in A$ we define $U_A(j)=U(j)$, while for $j\notin A$ we define $U_A(j)=V(j)$. Note that in particular $U_{\emptyset}=V$ while $U_D=U$.

\begin{lemma}\label{lem:graded}
For every $A\subseteq D$ we have $d(U_A,\mathcal P)=|A|/n$.
\end{lemma}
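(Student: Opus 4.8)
The plan is to prove the two inequalities $d(U_A,\mathcal P)\le |A|/n$ and $d(U_A,\mathcal P)\ge |A|/n$ separately. The upper bound is immediate: since $V\in\mathcal P$ and $U_A$ differs from $V$ only on the coordinates of $A$ (where it takes the values of $U$), we have $d(U_A,V)=|A|/n$, hence $d(U_A,\mathcal P)\le|A|/n$. The entire content of the lemma is therefore the lower bound, and this is the step I expect to be the main obstacle.

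For the lower bound, suppose toward a contradiction that there is some $W\in\Sigma^n\cap\mathcal P$ with $d(U_A,W)<|A|/n$. The idea is to use $W$ to produce a satisfying input that is strictly closer to $U$ than $V$ is, contradicting the minimality that defines $k'=|D|$. The natural candidate is to take $W$ and overwrite it on the coordinates of $D\setminus A$ with the values of $U$ (equivalently of $U_D=U$); call this input $W'$. Since $W$ need not lie in $\mathcal P$ after such surgery, this direct approach does not obviously work, so instead I would bound $d(U,W)$ directly: $U$ and $U_A$ agree exactly on $D\setminus A$ (there both equal $U$) and on $[n]\setminus D$ (there $U=V=U_A$ ... wait, no: on $[n]\setminus D$, $U=V$, and $U_A=V$, so they agree; on $D\setminus A$, $U_A=V\ne U$). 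Let me restate: $U$ and $U_A$ differ exactly on $D\setminus A$, so $d(U,U_A)=|D\setminus A|/n=(k'-|A|)/n$. By the triangle inequality, $d(U,W)\le d(U,U_A)+d(U_A,W)<(k'-|A|)/n+|A|/n=k'/n$. Thus $W\in\mathcal P$ is a satisfying input with $d(U,W)<k'/n=d(U,\mathcal P)$, contradicting the definition of $k'$ as the distance from $U$ to $\mathcal P$. Hence $d(U_A,\mathcal P)\ge|A|/n$, and combined with the upper bound we get equality.

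The one point requiring care is making sure all distances are being measured within $\Sigma^n$ (they are, since all of $U$, $V$, $W$, $U_A$ live in $\Sigma^n$) and that the strict inequality is genuinely strict after chaining through the triangle inequality, which it is because $d(U_A,W)<|A|/n$ was assumed strict. No separate treatment of the boundary cases $A=\emptyset$ or $A=D$ is needed, as the argument is uniform in $A$. This completes the proof sketch.
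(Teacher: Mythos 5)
Your proof is correct and is essentially the same argument as the paper's: the upper bound via $d(U_A,V)=|A|/n$ is identical, and your contradiction argument for the lower bound is just a rephrasing of the paper's direct application of the reverse triangle inequality $d(U_A,\mathcal P)\geq d(U,\mathcal P)-d(U_A,U)$. The only cosmetic difference is that you package the lower bound as a proof by contradiction rather than stating the inequality directly.
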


\begin{proof}
On one hand we have $d(U_A,\mathcal P)\leq d(U_A,V)=|A|/n$. On the other hand by the triangle inequality we have $d(U_A,\mathcal P)\geq d(U,\mathcal P)-d(U_A,U)=|D|/n-(|D|-|A|)/n=|A|/n$.
\end{proof}

Using $U$ and $V$ above, we define a task that is weaker than a test but is easier to bound against.

\begin{defi}
For every $0\leq l\leq k'$, a {\em $(U,V,l)$-distinguisher} is a query-making algorithm that accepts with probability at least $\frac23$ the input $V$, and rejects with probability at least $\frac23$ any input from the family $\mathcal U_l=\{U_A:A\subseteq D \wedge |A|=l\}$. All other inputs can be accepted with any probability.
\end{defi}

Note that by Lemma \ref{lem:graded} every $(l/n)$-test for $\mathcal P$ is in particular a $(U,V,l)$-distinguisher. We show that without loss of generality, algorithms for the latter task can be assumed to be quite simple

\begin{lemma}\label{lem:in}
Without loss of generality, a $(U,V,l)$-distinguisher does not take queries outside $D$.
\end{lemma}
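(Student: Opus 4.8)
The plan is to take an arbitrary $(U,V,l)$-distinguisher $\mathcal A$ making $q$ queries and to build from it a $(U,V,l)$-distinguisher $\mathcal A'$ making at most $q$ queries, all of them inside $D$. The key observation is that every input whose behavior is constrained by the definition of a distinguisher --- namely $V$ itself and every member of the family $\mathcal U_l$ --- agrees with $V$ on all of $[n]\setminus D$: indeed $V(j)=U(j)$ for $j\notin D$ by the definition of $D$, and $U_A(j)=V(j)$ for $j\notin D$ whenever $A\subseteq D$. Hence the answer to a query outside $D$ carries no information that could separate these inputs, and it can be supplied by $\mathcal A'$ itself.

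Concretely, $\mathcal A'$ runs a simulation of $\mathcal A$ using the same internal coin tosses. Whenever the simulated $\mathcal A$ selects an index $j_i$ to query: if $j_i\in D$ then $\mathcal A'$ genuinely queries $j_i$ in its own input and passes the returned value back to the simulation; if $j_i\notin D$ then $\mathcal A'$ queries nothing and instead passes the fixed value $V(j_i)$ --- a constant hard-wired into $\mathcal A'$, since $U$, $V$ and $D$ are fixed by $\mathcal P$ --- back to the simulation. When the simulation halts, $\mathcal A'$ returns the same verdict. Each index queried outside $D$ is skipped, so $\mathcal A'$ makes at most $q$ queries; if an exact count of $q$ is wanted, any skipped query can be replaced by a dummy query to a fixed element of $D$ (which is nonempty since $|D|=k'\geq k\geq 1$) without affecting anything.

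It then remains to check that $\mathcal A'$ is still a $(U,V,l)$-distinguisher. On input $V$: for $j_i\in D$ the real query returns $V(j_i)$, and for $j_i\notin D$ the simulation is handed $V(j_i)$ by construction, so the run of $\mathcal A'$ on $V$ coincides, coin sequence by coin sequence, with the run of $\mathcal A$ on $V$; hence $\mathcal A'$ accepts $V$ with probability at least $\frac23$. On input $U_A\in\mathcal U_l$: for $j_i\in D$ the real query returns $U_A(j_i)$, and for $j_i\notin D$ the simulation is handed $V(j_i)=U_A(j_i)$, so the run of $\mathcal A'$ on $U_A$ coincides with the run of $\mathcal A$ on $U_A$; hence $\mathcal A'$ rejects $U_A$ with probability at least $\frac23$. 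On all other inputs no guarantee is required, so $\mathcal A'$ witnesses the claim.

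As for difficulty, there is essentially no obstacle: the only points that need care are that the substituted answers $V(j_i)$ really do coincide, on \emph{every} relevant input, with what a real query would have returned --- this is exactly what makes the reduction lossless rather than merely heuristic --- and that the argument be phrased so as to cover adaptive $\mathcal A$, where the index $j_i$ depends on the earlier answers and hence on whether they were real or substituted, which is harmless precisely because those answers agree on the inputs we care about.
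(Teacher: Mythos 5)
Your proof is correct and follows essentially the same approach as the paper: simulate $\mathcal A$, and when it asks for a query outside $D$, feed it the hard-wired value $V(j_i)=U(j_i)$ instead of actually querying, observing that on $V$ and on every $U_A$ with $A\subseteq D$ this substitution is exact. The paper's write-up is a bit more terse (and doesn't bother with the remark about padding to an exact count of $q$), but the argument is the same.
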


\begin{proof}
Given a $(U,V,l)$-distinguisher $\mathcal A$, we construct a $(U,V,l)$-distinguisher $\mathcal A'$ that does not take queries outside $D$ (and never takes more queries than $\mathcal A$) by the following: We run $\mathcal A$, and whenever at some stage $i$ the algorithm is about to take a query $X(j_i)$ from an index $j_i\notin D$, we do not take it and instead continue running as if the answer was $U(j_i)=V(j_i)$.

Note that by the definition of a $(U,V,l)$-distinguisher, if $X(j)\neq U(j)$ for some $j\in [n]\setminus D$ then the output of $\mathcal A'$ over $X$ does not matter, while in the case where $X(j)=U(j)$ for all $j\in [n]\setminus D$ both $\mathcal A$ and $\mathcal A'$ behave the same.
\end{proof}

\begin{lemma}\label{lem:nad}
Without loss of generality, a $(U,V,l)$-distinguisher does not take queries outside $D$ and is additionally non-adaptive.
\end{lemma}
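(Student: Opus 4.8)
The plan is to show that adaptivity gives no advantage for the distinguishing task, by exploiting the fact that on the "interesting" inputs (namely $V$ and the members of $\mathcal U_l$) every index in $D$ carries one of only two possible values: $V(j)$ or $U(j)$. So I would first observe, using Lemma \ref{lem:in}, that we may assume the given distinguisher $\mathcal A$ queries only inside $D$, and moreover — since all relevant inputs agree with $V$ and $U$ on $D$ and these disagree everywhere on $D$ — the answer to a query $X(j_i)$ is effectively a single bit $b_i\in\{0,1\}$ recording whether $X(j_i)=V(j_i)$ (bit $0$) or $X(j_i)=U(j_i)$ (bit $1$); any other answer can only occur on an input we don't care about, so $\mathcal A'$ may behave arbitrarily there. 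Thus without loss of generality $\mathcal A$ is a decision tree of depth $q$ whose internal nodes are labeled by indices of $D$ and whose edges are labeled $0/1$.

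Next I would derandomize the choice of index at each node in the standard way: fix the internal coin tosses of $\mathcal A$ so as to retain its worst-case (over the two relevant input families, with respect to the distinguishing guarantees) behavior — more precisely, by averaging, some setting of the coins still accepts $V$ with probability $\ge\frac23$ and rejects every $U_A\in\mathcal U_l$ with probability $\ge\frac23$, where the remaining probability is only over the coins used for the final accept/reject decision. Actually it is cleanest to keep all randomness but note that the \emph{querying} behavior on the inputs of interest is governed entirely by the bit-sequence $(b_1,\dots,b_q)$ read so far, so $\mathcal A$ restricted to these inputs is a distribution over deterministic decision trees over $D$ with $0/1$ labels.

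The heart of the argument is to collapse such a decision tree to a fixed query set. Consider running $\mathcal A$ and always feeding it the answer $b_i=0$ (i.e.\ pretending every queried index lies outside $A$): this traces a single root-to-leaf path in the tree, visiting a fixed set $Q\subseteq D$ with $|Q|\le q$. I would define the non-adaptive algorithm $\mathcal A''$ to query exactly the indices in $Q$ (together with whatever indices the leaf's decision procedure would inspect — but these are none beyond $Q$), and then decide as follows: if every queried index $j\in Q$ has $X(j)=V(j)$, accept iff $\mathcal A$ accepts at the end of the all-zero path; otherwise reject. For the input $V$, every query returns $V(j)$, so $\mathcal A''$ reproduces exactly $\mathcal A$'s behavior on $V$, hence accepts with probability $\ge\frac23$. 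For any $U_A\in\mathcal U_l$ with $l\ge 1$, I claim $\mathcal A''$ rejects with probability $\ge\frac23$: here is where the main subtlety lies — I must argue that $\mathcal A$ itself, when run on $U_A$, either also follows the all-zero path and rejects, or deviates from it at the first query landing in $A$; in the former case $\mathcal A''$ mimics $\mathcal A$ and rejects, and in the latter case $Q$ contains an index of $A$, so $X(j)=U(j)\ne V(j)$ for that $j$ and $\mathcal A''$ rejects outright. Summing the probabilities of these two cases over $\mathcal A$'s coins gives exactly the rejection probability of $\mathcal A$ on $U_A$, which is $\ge\frac23$. The main obstacle is making this case split airtight across the randomness — ensuring the two events ("$\mathcal A$ follows the all-zero path" and "$\mathcal A$ deviates on an index of $A$") are complementary and that $\mathcal A''$'s decision rule matches $\mathcal A$'s outcome on the first event — but since the all-zero path is determined before seeing the input and deviation can only be caused by hitting an index of $A$, the split is exactly right. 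Finally, $\mathcal A''$ is non-adaptive, queries only inside $D$, and uses at most $q$ queries, completing the proof.
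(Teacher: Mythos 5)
Your proposal is correct and takes essentially the same route as the paper: simulate the adaptive algorithm as if every answer equals $V(j)$ (your ``all-zero path'' is exactly this), query that index set non-adaptively, reject outright if any actual answer differs from $V$, and otherwise follow the original decision. Two small notes: the averaging detour to fix coins is a false start that you rightly abandon (one coin setting cannot in general preserve the guarantee for all $U_A$ simultaneously), and your final accounting actually shows $\mathcal A''$'s rejection probability on $U_A$ is \emph{at least} (not ``exactly'') that of $\mathcal A$, since $\mathcal A''$ rejects unconditionally upon deviation whereas $\mathcal A$ need not --- but this inequality goes in the right direction.
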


\begin{proof}
Given a $(U,V,l)$-distinguisher $\mathcal A$, we first construct a $(U,V,l)$-distinguisher $\mathcal A'$ that does not take queries outside $D$ as per Lemma \ref{lem:in}. We then construct a $(U,V,l)$-distinguisher $\mathcal A''$ as follows: We follow the run of $\mathcal A'$, but in every stage $i$, for the calculation of where to take the query $j_i$, instead of the values $X(j_1),\ldots,X(j_{i-1})$ we use the values $V(j_1),\ldots,V(j_{i-1})$. After all queries have been taken, we check whether there is any $i\in [q]$ for which $X(j_i)\neq V(j_i)$. If there is such an $i$ we reject the input, and otherwise we follow the decision of $\mathcal A'$ with respect to $X(j_1),\ldots,X(j_q)$.

We first note that if $X(j_i)\neq V(j_i)$ for some $i\in [q]$, then it cannot be the case that $X=V$, and hence rejecting the input in such a case is never a wrong decision (it is either a correct decision, or the situation is such that the output of the algorithm does not matter). We next note that conditioned on picking $j_1,\ldots,j_q$ such that $X(j_i)=V(j_i)$ for all $i\in [q]$, both $\mathcal A'$ and $\mathcal A''$ behave the same over $X$. Finally we note that $\mathcal A''$ is non-adaptive, since the decisions about which queries to take do not depend on $X$ at all (they only depend on the fixed in advance $V$).
\end{proof}

The following lemma immediately implies and concludes the proof of Theorem \ref{thm:met}.

\begin{lemma}[implying Theorem \ref{thm:met}]
Any $(U,V,l)$-distinguisher, and hence any $l/n$-testing algorithm for $\mathcal P$, requires at least $|D|/3l=\Omega(k/l)$ queries to meet its guarantees.
\end{lemma}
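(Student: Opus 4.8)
By Lemma \ref{lem:nad} we may assume the distinguisher is non-adaptive and queries only indexes in $D$; since $|D|=k'\geq k$, proving an $\Omega(|D|/l)$ lower bound gives the claimed $\Omega(k/l)$ bound. So fix such an algorithm $\mathcal A$ making $q$ queries, and let $Q\subseteq D$ be the (random) set of queried indexes, chosen according to the algorithm's coins; note $|Q|\leq q$ always. The plan is to show that if $q<|D|/3l$, then some input $U_A$ with $|A|=l$ is rejected with probability less than $\frac23$, contradicting the distinguisher guarantee.

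The key observation is that $\mathcal A$ can only ``notice'' that its input is $U_A$ rather than $V$ if at least one queried index lands in $A$, i.e.\ if $Q\cap A\neq\emptyset$; conditioned on $Q\cap A=\emptyset$, the views $X(j_1),\dots,X(j_q)$ under input $U_A$ and under input $V$ are identical, so $\mathcal A$ must accept $U_A$ with exactly the probability (over the remaining coins) that it would accept $V$ in that branch. First I would set up the averaging: pick $A$ to be a uniformly random subset of $D$ of size exactly $l$, independently of $\mathcal A$'s coins. For any fixed realization of the coins yielding query set $Q$ with $|Q|=m\leq q$, the probability (over the choice of $A$) that $Q\cap A=\emptyset$ is $\binom{|D|-m}{l}/\binom{|D|}{l}\geq (1-l/(|D|-l))^{\,m}\cdot(\text{something})$ — more cleanly, it is at least $1 - \frac{ml}{|D|}$ by a union bound over the $l$ elements of $A$ each avoiding the $m$-element set $Q$ (equivalently, $\Pr[Q\cap A\neq\emptyset]\leq \frac{|Q|\cdot l}{|D|}\leq \frac{ql}{|D|}$, viewing $A$ as $l$ samples without replacement and bounding each hit probability by $|Q|/|D|$). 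Hence if $q<|D|/3l$ then $\Pr_{A}[Q\cap A=\emptyset]>\frac23$ for every fixed $Q$ with $|Q|\leq q$, and averaging over the algorithm's coins, $\Pr_{A,\,\mathrm{coins}}[Q\cap A=\emptyset]>\frac23$.

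From here I would compare rejection probabilities. Let $\mathrm{rej}(X)$ denote the probability $\mathcal A$ rejects input $X$. On the event $Q\cap A=\emptyset$, the transcript under $U_A$ equals the transcript under $V$, so the conditional rejection probability of $U_A$ equals that of $V$ on that event; since $\mathcal A$ accepts $V$ with probability $\geq\frac23$, the rejection probability of $V$ is $\leq\frac13$, so the contribution to $\mathbb E_A[\mathrm{rej}(U_A)]$ from the event $Q\cap A=\emptyset$ is at most $\frac13$. The contribution from $Q\cap A\neq\emptyset$ is at most $\Pr[Q\cap A\neq\emptyset]<\frac13$. Therefore $\mathbb E_A[\mathrm{rej}(U_A)]<\frac23$, so there exists a particular $A\subseteq D$ with $|A|=l$ for which $\mathrm{rej}(U_A)<\frac23$ — but $U_A\in\mathcal U_l$, contradicting the definition of a $(U,V,l)$-distinguisher. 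Hence $q\geq |D|/3l$, and since $|D|=k'\geq k$ this is $\Omega(k/l)$, which by Lemma \ref{lem:graded} also lower-bounds any $l/n$-test for $\mathcal P$.

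\textbf{Main obstacle.} The only real subtlety is getting the ``collision probability'' bound $\Pr_A[Q\cap A\neq\emptyset]\leq |Q|\,l/|D|$ clean enough to yield the constant $3$ in $|D|/3l$, and making sure the conditioning argument is valid for adaptive-looking transcripts — but Lemma \ref{lem:nad} has already removed adaptivity, so $Q$ is a function of the coins alone and independent of $A$, which makes the union bound over the $l$ elements of $A$ (each of which lands in the fixed set $Q$ with probability at most $|Q|/|D| \le q/|D|$) completely routine. Everything else is a short averaging argument.
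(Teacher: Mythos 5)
Your proof is correct and reaches exactly the same $|D|/3l$ threshold, but it takes a dual route at the one step where the paper is most deliberately ``first-principles.'' Both arguments begin identically: invoke Lemma~\ref{lem:nad} to reduce to a non-adaptive algorithm querying only within $D$, observe that conditioned on no query landing in $A$ the transcripts for $U_A$ and $V$ coincide, and derive a contradiction from the required $\geq\frac13$ gap in acceptance probabilities. The difference is in how the hard set is produced. The paper proceeds deterministically: it sets $p_j=\Pr[j\text{ queried}]$ for each $j\in D$, notes $\sum_{j\in D}p_j\le q$, and takes $D'$ to be the $l$ indexes with smallest $p_j$, getting $\sum_{j\in D'}p_j\le ql/|D|$ by an averaging bound over the $p_j$'s; a union bound then gives one concrete hard input $U_{D'}$. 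You instead place a uniform distribution over $l$-subsets $A\subseteq D$, apply the union bound on the input side to get $\Pr_A[Q\cap A\ne\emptyset]\le l\,|Q|/|D|\le ql/|D|$ (valid, since each of the $l$ samples without replacement is marginally uniform on $D$, and $Q$ depends only on the coins, independently of $A$), bound $\mathbb E_A[\mathrm{rej}(U_A)]$ by splitting on whether $Q\cap A=\emptyset$, and extract a bad $A$ by the probabilistic method. Your bookkeeping of the two contributions is correct. The two arguments are morally the same union bound read in opposite directions: the paper's version names the adversarial instance explicitly and avoids any distribution over inputs (consistent with the author's stated aim of not using Yao-style reasoning), while yours is the averaged form, which is a touch shorter to state but re-introduces the input-distribution averaging the paper was written to sidestep.
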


\begin{proof}
Given a $(U,V,l)$-distinguisher $\mathcal A$, we first use Lemma \ref{lem:nad} and assume without loss of generality that $\mathcal A$ is non-adaptive and only takes queries in $D$. Let $q$ be its number of queries.

For every $j\in D$ we now denote by $E_j$ the event that the query $X(j)$ was taken from the input at any stage of the algorithm (that is, that $j_i=j$ for some $i\in [q]$), and define $p_j=\Pr[E_j]$. Since at any run of the algorithm at most $q$ of the events $\{E_j:j\in [q]\}$ happen, we have $\sum_{j\in D}p_j\leq q$.

We let $D'\subset D$ denote the set of the $l$ indexes in $D$ with the smallest (not necessarily distinct) values of $p_j$. We have then $\sum_{j\in D'}p_j\leq ql/|D|$. Let us analyze the run of $\mathcal A$ when given $X=U_{D'}$ as input, in comparison to its run when given $X=V$ as input. Denoting by $E_{D'}$ the union event $\bigcup_{j\in D'}E_j$, we obtain by the union bound $\Pr[E_{D'}]\leq\sum_{j\in D'}\Pr[E_j]\leq ql/|D|$.

The probability of $\mathcal A$ to accept the input, conditioned on $E_{D'}$ not happening, is the same for $U_{D'}$ and $V$, since conditioned on $\neg E_{D'}$ we have $U_{D'}(j_i)=V(j_i)$ for all $i\in [q]$. Hence, the unconditioned acceptance probabilities of $U_{D'}$ and $V$ differ by not more than $\Pr[E_{D'}]\leq ql/|D|$. For $q<|D|/3l$ this is less than $\frac13$. On the other hand, a $(U,V,l)$-distinguisher should accept $V$ with probability at least $\frac23$, while $U_{D'}\in \mathcal U_l$ should be accepted with probability not exceeding $\frac13$, a contradiction.
\end{proof}

\bibliographystyle{amsplain}
\bibliography{lb}

\end{document}